\newtheorem{thm}{Theorem}[section]
\theoremstyle{definition}
\newtheorem{defn}[thm]{Definition}
\newcommand{\bb}[1]{\mathbb{#1}}
\newcommand{\cl}[1]{\mathcal{#1}}
\begin{document}

\title[Synchronous LCS Games]{Synchronous Linear Constraint System Games}

\author[A.~Goldberg]{Adina Goldberg}
\thanks{This work was done with support from the Government of Ontario via an Ontario Graduate Scholarship.}
\address{Institute for Quantum Computing and Department of Pure Mathematics, University of Waterloo,
Waterloo, ON, Canada N2L 3G1}
\email{adina.goldberg@uwaterloo.ca}


\begin{abstract}
Synchronous linear constraint system games are nonlocal games that verify whether or not two players share a solution to a given system of equations. Two algebraic objects associated to these games encode information about the existence of perfect strategies. They are called the game algebra and the solution group. Here we show that these objects are essentially the same, i.e., that the game algebra is a suitable quotient of the group algebra of the solution group. We also demonstrate that linear constraint system games are equivalent to graph isomorphism games on a pair of graphs parameterized by the linear system.
\end{abstract}

\maketitle
\tableofcontents
\section{Introduction}
Nonlocal games are recent objects of study in quantum information theory and complexity theory. They have been used to distinguish between various models of quantum entanglement. They have also appeared recently in operator algebras, playing a role in proving that the Connes embedding conjecture is false \cite{JNVWY}.

Here we unify and expand upon results about a nonlocal game called syncLCS$(A,b)$, parameterized by the system of equations $Ax=b$ over the finite field of order $p$. The game is meant to verify whether two players share a solution to the system of equations. Many tasks can be formulated in terms of solving a linear system of equations, so in a sense this is a general-purpose game. The game was introduced in \cite{KPS} as a synchronous version of the binary constraint system game from \cite{CM}.  Two distinct algebraic objects have been associated to this game, both encoding information about the existence of perfect strategies. One such object is a group called the solution group, due to \cite{CLS}, and the other is a *-algebra called the game algebra, due to \cite{OP} and described in more detail in \cite{HMPS}.

Our first result, Theorem \ref{thm:solngp}, states that the game algebra is *-isomorphic to a suitable quotient of the group algebra of the solution group. In other words, both algebraic objects encode the same information about the existence of perfect strategies.

Our second result, Theorem \ref{thm:sync-iso}, improves upon a theorem from \cite{BCEHPSW}. We show that the game syncLCS$(A,b)$ is *-equivalent to the graph isomorphism game, introduced in \cite{AMRSSV}, for a suitable pair of graphs. *-equivalence is useful for the following reason. When two games are *-equivalent, if there is a perfect strategy for one game in any of the standard quantum correlation sets (local, quantum tensor, quantum approximate, quantum commuting, etc.) then there is a perfect strategy for the other game in the same set.

This work builds on definitions in all of the papers cited above, but we restate them here for readability. Some familiarity with nonlocal games and with the classes of quantum correlations/strategies would be useful background for the reader, but is not strictly necessary in order to read the results or the proofs.
\section{Synchronous games}
\label{sec:syncgames}
Synchronous games were first hinted at in \cite{PSSTW} and properly introduced in \cite{HMPS}. We review the definition here. A synchronous game is a two player nonlocal game where both players have the same finite question (input) set, the same finite answer (output) set, and the players must give the same answer when receiving the same question. More formally\dots
\begin{defn}
A synchronous game $\cl G = (\cl I, \cl O, \lambda)$ is an ordered triple where $\cl I, \cl O$ are finite sets and $\lambda:\cl O^2 \times \cl I^2 \to \{0,1\}$, such that for $x \neq y\in \cl O, i\in \cl I$, $\lambda(x,y|i,i) = 0$.
\end{defn}

To a synchronous game we can associate a $*$-algebra that we will call the game algebra, capturing the structure of the game. This was first introduced in \cite{HMPS}. We review the game algebra construction here, as it is central to this paper.
\begin{defn}
Given a synchronous game $\cl G = (\cl I, \cl O, \lambda)$, the \emph{game algebra} $\cl A(\cl G)$ is the unital $*$-algebra over the complex numbers, generated by $e_{i,x}$ for $i\in \cl I, x\in\cl O$, with the following relations:
\begin{enumerate}
    \item $e_{i,x}^2 = e_{i,x} = e_{i,x}^*$
    \item \label{item:sumone} $\sum_{x\in\cl O} e_{i,x} = 1$
    \item \label{item:rules} $\lambda(x,y|i,j) = 0 \implies e_{i,x}e_{j,y} = 0$
\end{enumerate}
\end{defn}

Each projection $e_{i,x}$ corresponds to one of the players receiving the input $i$ and responding with output $x$. Relation \ref{item:sumone} captures the fact that a player must always respond with \emph{some} output. Relation \ref{item:rules} (which can actually be a whole collection of relations) captures the rules of the game, imposing orthogonality on input-output pairs that lose the game.

The game algebra captures a number of interesting properties about the existence of perfect strategies in various physical classes (corresponding to various quantum models). For example, there is a perfect deterministic strategy for the game $\cl G$ if and only if there is a unital $*$-homomorphism from $\cl A(\cl G)$ into $\mathbb{C}$. For proof, and for more of these correspondences, see \cite{HMPS}.

For the reader familiar with the idea of the $C^*$-algebra of a game as in \cite{LMPR}, that can be realized as a quotient of the game algebra.

We will need the following notion of equivalence of synchronous games.
\begin{defn}
We say two synchronous games $\cl G_1, \cl G_2$ are $*$-equivalent if there are unital $*$-homomorphisms $\cl A(\cl G_1) \to \cl A (\cl G_2)$ and $\cl A( \cl G_2) \to \cl A(\cl G_1)$.
\end{defn}
This notion of equivalence is notably quite a bit weaker than isomorphism of the game algebras. It is motivated by its preservation of perfect strategies in all of the relevant strategy classes.

\section{The syncLCS game}
\label{sec:syncLCS}

Let $p$ be a prime number. Given an $m\times n$ matrix $A = (A_{i,j})$ and a vector $b = (b_i)$ over the field $\mathbb{Z}_p := \bb Z/p\bb Z$, the game $\textrm{syncLCS}(A,b)$ has Alice and Bob convincing a referee that they share a solution $x$ to the equation $Ax=b$. This game generalizes the syncBCS game introduced in \cite{KPS}, allowing more than just binary coefficients. LCS stands for `linear constraint system' whereas BCS stands for `binary constraint system', because in syncBCS, $p=2$.

For a vector $x\in Z_p^n$, let $\textrm{supp}(x) := \{j \in \{1,\dots,n\} : x_j \neq 0\}$ be the support of $x$, and let $f_i = [0 \dots 0\ 1\ 0 \dots 0]^T$ be the $i^\mathrm{th}$ standard basis vector in $Z_p^n$.  Notice that the $i^\mathrm{th}$ row of $A$ is given by $f_i^TA$.  For rows $i=1,\dots,m$, we let $$V_i := \textrm{supp}(f_i^T A).$$ So $V_i$ is the support of the $i^\mathrm{th}$ row of $A$. We then define $$S_i := \{ x \in \bb Z_p^n : f_i^T A x = b_i \textrm{ and } \textrm{supp}(x)\subseteq V_i\}.$$
$S_i$ is the set of solutions to the $i^\mathrm{th}$ equation, modulo discrepancies where coefficients of $A$ are $0$. When we are working with multiple games, there may be confusion over which linear system $S_i$ refers to. In that case, we will write $S_i(A,b)$ instead.

We now define the synchronous game associated to the system $Ax=b$ (the reader may verify that what follows is indeed a synchronous game).

\begin{defn}
    Let $A$ be an $m\times n$ matrix over $\bb Z_p$ and let $b \in \bb Z_p^n$. Then the \emph{synchronous LCS game} associated to the linear system $Ax=b$ and denoted $\mathrm{syncLCS}(A,b)$ is the tuple $(\cl I, \cl O, \lambda)$, where
    \begin{enumerate}
        \item the input set $\cl I = \{1,\dots,m\}$, corresponding to rows of $A$,
        \item the output set $\cl O = \bb Z_p^n$, corresponding to potential solutions to the system, and
        \item given inputs $(i,j)$ and outputs $(x,y)$, $\lambda(x,y|i,j)=1$ iff $(x,y)\in S_i \times S_j$ and $x_k = y_k$ for all $k\in V_i \cap V_j$.
    \end{enumerate}
\end{defn}

We can think of Alice and Bob winning the game if they provide compatible solutions to their respective equations. In fact, the notion of compatibility is quite useful in the remainder of this work, so we formalize it as follows.

\begin{defn}
Let $i,j\in\{1,\dots,m\}$, and $x\in S_i(A,b), y\in S_j(A,b)$. We say $(x,y)$ are \emph{compatible} solutions for rows $(i,j)$ if for all $k\in V_i\cap V_j$, $x_k = y_k$. Otherwise, we call the pair $(x,y)$ \emph{incompatible} for rows $(i,j)$.
\end{defn}

Note that when $i=j$, $(x,y)$ being compatible solutions to row $i$ is simply equivalent to $x=y$.

When $p=2$ in the syncLCS game, the game is referred to as syncBCS$(A,b)$, as introduced in \cite{KPS}. A nonsynchronous version of this game was introduced earlier in \cite{CLS}, where properties of syncBCS were shown to be captured by a group the authors called the solution group, denoted $\Gamma(A,b)$. The solution group was defined there for $p=2$. It was noted in that paper that the solution group could be defined for any prime $p$. Following that note, we set down a fairly direct generalization to the case when $p>2$, allowing us to associate a group to the syncLCS game.


\begin{defn}
\label{def:psolgp}
 The \emph{solution group} $\Gamma(A,b)$ is the group generated by symbols $g_1, \dots, g_n$ and $J$ satisfying the following relations.
\begin{enumerate}
    \item $g_j^p = 1$ for $j=1,\dots,n$
    \item $J^p = 1$
    \item $[g_j,J] = 1$ for $j=1,\dots,n$
    \item $[g_j,g_\ell] =1$ if $j,\ell\in V_i$ for some $i\in \{1,\dots,m\}$
    \item $$\prod_{j=1}^n g_j^{A_{i,j}} = J^{b_i} \text{ for } i=1,\dots,m $$
\end{enumerate}
\end{defn}

Each $g_j$ represents a variable $x_j$ in the system of equations $Ax = b$. $J$ is meant to represent a primitive $p^\textrm{th}$ root of unity. (In the case that $p=2$, the solution group can be thought of as a multiplicative representation of the system of equations, where $0$s become $1$s, and $1$s become $J$s, which should be thought of as $-1$s.) Any two variables that appear in the same equation in the system $Ax=b$ are required to have corresponding group elements that commute. Finally, the variables in a given equation should have their corresponding group elements multiply to an appropriate power of $J$ depending on the value of $b_i$. Note that there is a one-to-one correspondence between powers of $J$ and possible values for $b_i$.

\section{The game algebra and the solution group}
\label{sec:solngp}

Let $\omega \in \mathbb{C}$ be a primitive $p^\text{th}$ root of unity. Consider the $*$-algebra $\bb C \Gamma(A,b)$. In this algebra, there is an element $\omega$ and an element $J$. In keeping with the motivation for constructing the solution group, we identify $J$ with $\omega$ through a quotient. (In the $p=2$ case, this is the identification that ensures $J$ really does represent $-1$.) Similarly to $\cl A(\mathrm{syncLCS}(A,b))$, the resulting $*$-algebra encodes information about perfect strategies in the syncLCS game. In fact, the two algebras are isomorphic -- the first of the main results of this paper.

\begin{thm}
\label{thm:solngp}
The following algebras are isomorphic as unital $*$-algebras.
$$\cl A(\mathrm{syncLCS}(A,b)) \cong \frac{\bb C \Gamma(A,b)}{\langle J-\omega 1 \rangle}$$
\end{thm}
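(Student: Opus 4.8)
The plan is to construct the isomorphism explicitly in both directions by matching generators, and then verify that each assignment respects the defining relations of the two algebras. Write $B := \bb C\Gamma(A,b)/\langle J - \omega 1\rangle$; it is the unital $*$-algebra generated by unitaries $\bar g_1,\dots,\bar g_n$ of order $p$, with $\bar g_j$ commuting whenever $j,\ell$ lie in a common $V_i$, and with $\prod_j \bar g_j^{A_{i,j}} = \omega^{b_i}1$. The first step is to produce a homomorphism $\Phi\colon B \to \cl A(\mathrm{syncLCS}(A,b))$. For each variable index $j$, I want to build a unitary of order $p$ out of the projections $e_{i,x}$; the natural candidate, after choosing for each $j$ some row $i(j)$ with $j \in V_{i(j)}$, is $\Phi(\bar g_j) := \sum_{x \in S_{i(j)}} \omega^{x_j} e_{i(j),x}$, which is a sum of orthogonal projections summing (using relation~\ref{item:sumone} restricted to winning outputs) to $1$ times a $p$-th root of unity in each block. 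One checks $\Phi(\bar g_j)^p = 1$ and $\Phi(\bar g_j)^* = \Phi(\bar g_j)^{-1}$ immediately. The genuine content is that (a) this is independent of the choice of row $i(j)$ — two rows sharing the variable $j$ will, via the compatibility relation~\ref{item:rules}, force their respective weighted sums to agree; (b) the commutation relations hold — if $j,\ell \in V_i$ we may take $i(j)=i(\ell)=i$ and the two sums are built from the same orthogonal family; and (c) the linear-system relation holds — $\prod_j \Phi(\bar g_j)^{A_{i,j}}$ telescopes over the block decomposition indexed by $S_i$ to $\sum_{x\in S_i}\omega^{\sum_j A_{i,j}x_j} e_{i,x} = \sum_{x\in S_i}\omega^{b_i}e_{i,x} = \omega^{b_i}1$, using $f_i^T A x = b_i$ on $S_i$ and again the "sum over winning outputs is $1$" identity. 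Since $B$ is presented by exactly these relations, $\Phi$ is well-defined.

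The second step is the reverse homomorphism $\Psi\colon \cl A(\mathrm{syncLCS}(A,b)) \to B$. I need to hand over, for each input $i$, a family of orthogonal projections $\{E_{i,x}\}_{x}$ indexed by $\cl O = \bb Z_p^n$, summing to $1$, with $E_{i,x}=0$ unless $x \in S_i$, and with $E_{i,x}E_{j,y}=0$ whenever $(x,y)$ is incompatible for rows $(i,j)$. Inside $B$, for each $i$ the commuting unitaries $\{\bar g_k : k \in V_i\}$ generate a commutative $*$-subalgebra, a quotient of $\bb C[\bb Z_p^{V_i}]$; its minimal projections are the "joint spectral projections" $P_{i,v} := \prod_{k\in V_i}\tfrac1p\sum_{t=0}^{p-1}\omega^{-t v_k}\bar g_k^{t}$ for $v \in \bb Z_p^{V_i}$, which satisfy $\bar g_k P_{i,v} = \omega^{v_k}P_{i,v}$. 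Define $E_{i,x} := P_{i,\,x|_{V_i}}$ when $x \in S_i$ and $E_{i,x} := 0$ otherwise; one must check $\sum_{x\in\bb Z_p^n}E_{i,x}=1$, which amounts to showing that $\sum_{v : \text{the "}b_i\text{" constraint fails}} P_{i,v}=0$, i.e. that the relation $\prod_k \bar g_k^{A_{i,k}}=\omega^{b_i}1$ kills exactly the spectral projections violating $f_i^T A x = b_i$. For orthogonality across inputs: if $(x,y)$ is incompatible for $(i,j)$ there is some $k\in V_i\cap V_j$ with $x_k\neq y_k$, and then $E_{i,x}E_{j,y}$ is annihilated because $\bar g_k$ acts as $\omega^{x_k}$ on the left factor and as $\omega^{y_k}$ on the right, forcing the product to be both $\omega^{x_k}$ and $\omega^{y_k}$ times itself — unequal scalars, so the product is $0$. (Here I use that $\bar g_k$ is central enough to pass through, which it is since it commutes with everything in the subalgebra generated by $V_i$ and by $V_j$.)

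The third step is to check $\Phi$ and $\Psi$ are mutually inverse. It suffices to check this on generators: $\Psi\Phi(\bar g_j)$ should return $\bar g_j$, and $\Phi\Psi(e_{i,x})$ should return $e_{i,x}$. The first is a short computation — $\Psi$ sends $\sum_{x\in S_{i(j)}}\omega^{x_j}e_{i(j),x}$ to $\sum_{v}\omega^{v_j}P_{i(j),v}=\bar g_j$ by the spectral decomposition of $\bar g_j$. The second uses that the $E_{i,x}$ are precisely the joint spectral projections of the unitaries $\Phi(\bar g_k)$, $k\in V_i$, inside $\cl A$, so applying $\Phi$ to the spectral-projection formula for $E_{i,x}$ rebuilds $e_{i,x}$ from the $e_{i,x'}$, $x'\in S_i$, via orthogonality. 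Finally, both maps are visibly $*$-preserving since they send self-adjoint generators to self-adjoint elements and unitary generators to unitaries respecting inverses.

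I expect the main obstacle to be Step~2, specifically verifying $\sum_{x}E_{i,x}=1$ in $B$: this is where the linear-system relation of $\Gamma(A,b)$ must be used in an essential way, translating "the scalar $\prod_k\bar g_k^{A_{i,k}}$ equals $\omega^{b_i}$" into "the joint spectrum of $(\bar g_k)_{k\in V_i}$ is supported on $S_i$." Concretely one writes $1 = \sum_{v\in\bb Z_p^{V_i}}P_{i,v}$ and must show $P_{i,v}=0$ for $v\notin\{x|_{V_i}:x\in S_i\}$; since $\prod_k\bar g_k^{A_{i,k}}$ acts on $P_{i,v}$ as the scalar $\omega^{\sum_k A_{i,k}v_k}=\omega^{f_i^TAv}$, the relation forces $(\omega^{f_i^TAv}-\omega^{b_i})P_{i,v}=0$, and since $\omega$ is a primitive $p$-th root of unity the bracket is nonzero exactly when $f_i^TAv\neq b_i$, giving $P_{i,v}=0$ there. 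The rest — independence of choices in Step~1, the commutation checks, and the inverse verification — is routine given the spectral-projection bookkeeping.
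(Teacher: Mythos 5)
Your proposal is correct, and the two maps you construct are exactly the ones in the paper: $\Phi(\bar g_j)=\sum_{x\in S_{i(j)}}\omega^{x_j}e_{i(j),x}$ and $\Psi(a_{i,x})=\prod_{k\in V_i}\frac1p\sum_t\omega^{-tx_k}\bar g_k^{t}$, with the same well-definedness check via the marginal projections and the same generator-level verification that the maps are mutually inverse. Where you genuinely diverge is on the step the paper itself flags as the hardest, namely $\sum_{x\in S_i}\Psi(a_{i,x})=1$. The paper proves this by expanding $p^{|V_i|}\sum_x\psi(a_{i,x})$ over partitions of $V_i$ into $p$ blocks and running a character-sum/symmetry argument to kill every multi-block term, which is a fairly heavy computation. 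You instead observe that $\prod_k\bar g_k^{A_{i,k}}$ acts on each joint spectral projection $P_{i,v}$ as the scalar $\omega^{f_i^TAv}$, so the relation $\prod_k\bar g_k^{A_{i,k}}=\omega^{b_i}1$ forces $P_{i,v}=0$ whenever $f_i^TAv\neq b_i$, and then $\sum_{x\in S_i}E_{i,x}=\sum_{v}P_{i,v}=1$ follows from the Fourier identity $\sum_s f_k(s)=1$. This eigenvalue argument is shorter, more conceptual, and makes transparent exactly where the linear-system relation enters. A further small advantage: you define $\Psi$ directly into the quotient $B=\bb C\Gamma(A,b)/\langle J-\omega1\rangle$, where the relation really does read $\prod_k\bar g_k^{A_{i,k}}=\omega^{b_i}1$; the paper defines $\psi$ into $\bb C\Gamma(A,b)$ itself and then quotients, but its sum-to-one computation already substitutes $\omega^{b_i}1$ for $J^{b_i}$, so strictly speaking it is only valid after passing to the quotient — your formulation sidesteps that wrinkle. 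The one place you should be careful to write out details is the independence of $\Phi(\bar g_j)$ from the choice of row $i(j)$: the mechanism you name (compatibility forcing the marginals $\sum_{x\in S_i,\,x_j=t}e_{i,x}$ and $\sum_{y\in S_k,\,y_j=t}e_{k,y}$ to coincide) is the right one, but it needs the short orthogonality-plus-resolution-of-identity computation the paper gives.
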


There is a related notion to the game algebra, called the $C^*$-algebra of a game, denoted $C^*(\cl G)$, that also encodes information about perfect strategies. $C^*(\cl G)$ is a quotient of the game algebra. Theorem \ref{thm:solngp} resembles \cite[Corollary 4.10]{LMPR} but is more general in that it is a result at the level of *-algebras as opposed to difficult-to-construct quotients. This result is also more general in that we do not restrict $p=2$. 

Before embarking on the proof of this theorem, let us briefly examine the algebra $A(\mathrm{syncLCS}(A,b))$. Following the game-algebra construction, we get a $*$-algebra over $\mathbb{C}$ generated by self-adjoint idempotents $a_{i,x}$ where $i\in\{1,\dots,m\}$ and $x\in S_i(A,b)$, satisfying the following relations:
\begin{enumerate}
    \item $\sum_{x\in S_i}a_{i,x} = 1$
    \item When $(x,y)$ are incompatible solutions to rows $(i,k)$ then $a_{i,x}a_{k,y} = 0$.
\end{enumerate}

\begin{proof}
We will define two $*$-algebra homomorphisms $\tilde\phi$ and $\overline{\psi}$ and then show that they are mutual inverses.

Let $a_{i,x}$ represent the generator of $\cl A(\mathrm{syncLCS}(A,b))$ corresponding to row $i$ and solution $x\in S_i$.

Define $ \phi : \bb C \Gamma(A,b) \to \cl A(\mathrm{syncLCS}(A,b))$ on the generators by
\begin{align*}
    \phi(g_j) &= \sum_{x\in S_i} \omega^{x_j}a_{i,x}\\
    \phi(J) &= \omega 1
\end{align*}
where $i$ is chosen so that $j\in V_i$.

We must check that $\phi$ is well-defined, i.e.\ that $\sum_{x\in S_i} \omega^{x_j}a_{i,x}$ doesn't depend on the choice of $i$.

Let $j \in V_i \cap V_k$. Fix $t \in \bb Z_p$.

Let $P_{i,j}(t):= \sum_{\substack{x\in S_i\\ x_j = t}}a_{i,x}$. Note that $\phi(g_j) = \sum_{t\in \bb Z_p} \omega^t P_{i,j}(t)$. By properties of the $a_{i,x}$:
\begin{align*}
    &P_{i,j}(t)^2 = P_{i,j}(t)&\\
    &P_{i,j}(t)P_{k,j}(s) = 0 & \text{for }s\neq t\\
    &\sum_{t\in \bb Z_p}P_{i,j}(t) = 1
\end{align*}
Therefore,
\begin{align*}
    P_{i,j}(t) &= P_{i,j}(t) \sum_{s\in \bb Z_p} P_{k,j}(s)\\
    &= P_{i,j}(t) P_{k,j}(t)\\
    &= \left(\sum_{t\in \bb Z_p} P_{i,j}(s)\right)P_{k,j}(t)\\
    &= P_{k,j}(t)
\end{align*}
But then we can write $\phi(g_j) = \sum_{t\in \bb Z_p} \omega^t P_{k,j}(t)$, so $\phi$ is well-defined.

We now show that $\phi$ is a $*$-algebra homomorphism by checking that the relations of $\Gamma(A,b)$ given in Definition \ref{def:psolgp} are satisfied, following the same numbering as in the definition.


\begin{enumerate}
\item We first check that $\phi(g_j)^p = 1$. Indeed,
        $$\phi(g_j)^p = \sum_{x\in S_i} \omega^{px_j}a_{i,x}^p = \sum_{x\in S_i}a_{i,x} = 1.$$
    as $a_{i,x}a_{i,y} = 0$ for $x\neq y$.
\item We now check that $\phi(J)^p = 1$.
Indeed, $$\phi(J)^p = \omega^p 1 = 1.$$ 
\item Each $\phi(g_j)$ commutes with $\phi(J) = \omega \in \bb{C}$.
\item Let $i\in\{1,\dots,m\}$. Let $j,\ell \in V_i$. We shall check that $\phi(g_j)$ commutes with $\phi(g_\ell)$. Recall that for $x,y \in S_i$, $a_{i,x}a_{i,y} = 0$ if $x\neq y$. So $\{a_{i,x}: x\in S_i\}$ commute. As $\phi(g_j)$ and $\phi(g_\ell)$ are linear combinations of this set, they commute as well.
\item Let $i \in \{1,\dots,m\}$. We must finally check that $\prod_{j\in V_i} \phi(g_j)^{A_{i,j}} = \phi(J)^{b_i}$.
\begin{align*}
    \prod_{j\in V_i} \phi(g_j)^{A_{i,j}} &=  \prod_{j\in V_i} \left(\sum_{x\in S_i} \omega^{A_{i,j}x_j}a_{i,x}\right)\\
    &= \sum_{x\in S_i}\left(\prod_{j\in V_i}\omega^{A_{i,j}x_j}\right)a_{i,x}\\
    &= \sum_{x\in S_i}\omega^{b_i}a_{i,x}\\
    &= \omega^{b_i}1\\
    &= \phi(J)^{b_i}
\end{align*}
\end{enumerate}

So $\phi: \bb C \Gamma(A,b) \to \cl A(\mathrm{syncLCS}(A,b))$ is a homomorphism of $*$-algebras. Because $J-\omega1 \in \textrm{ker}(\phi)$, there is an induced homomorphism $\frac{\bb C \Gamma(A,b)}{\langle J - \omega 1\rangle} \to \cl A(\mathrm{syncLCS}(A,b))$ which we will denote by $\tilde\phi$.



Now for the reverse direction we define 
\begin{align*}
    \psi : \cl A(\mathrm{syncLCS}(A,b)) &\to \bb C \Gamma(A,b) \\
    a_{i,x} &\mapsto \prod_{j\in V_i}f_j(x_j)
\end{align*}
where for $j\in\{1,\dots,m\}$, we have
\begin{align*}
    f_j:\bb Z_p &\to \bb C \Gamma(A,b)\\
    s &\mapsto \frac{1}{p} \sum_{t=0}^{p-1}(\omega^{-s}g_j)^t
\end{align*}

After checking $\psi$ is a $*$-homomorphism, we can compose it with the quotient map $\bb C \Gamma(A,b) \to \frac{\bb C \Gamma(A,b)}{\langle J - \omega 1\rangle}$ to get the $*$-homomorphism we are looking for:
$$\overline{\psi}: \cl A(\mathrm{syncLCS}(A,b)) \to \frac{\bb C \Gamma(A,b)}{\langle J - \omega 1\rangle} $$ 


In order to see that $\psi$ is a $*$-homomorphism, we will first demonstrate some facts about the family $f_j$. First and foremost, we will show that each $f_j(s)$ is a self-adjoint projection.

\begin{align*}
    f_j(s)^2
    &= \frac{1}{p^2}\left(1 + \omega^{-s} g_j + (\omega^{-s} g_j)^2 + \dots + (\omega^{-s} g_j)^{p-1}\right)^2\\
    &= \frac{p}{p^2} \left(1 + \omega^{-s} g_j + (\omega^{-s} g_j)^2 + \dots + (\omega^{-s} g_j)^{p-1} \right)\\
    &= f_j(s)\\
    f_j(s)^* &= \frac{1}{p} \sum_{t=0}^{p-1}(\omega^s g_j^{-1})^t\\
    &= \frac{1}{p}\sum_{r=1}^p \omega^{-s(r-p)}g_j^{r-p}\\
    &= \frac{1}{p} \sum_{r=0}^{p-1} (\omega^{-s}g_j)^r\\
    &= f_j(s)
\end{align*}

Now we will see that $g_j$ decomposes as a linear combination of the $f_j(s)$ as follows:
\begin{align*}
    \sum_{s=0}^{p-1} \omega^s f_j(s) &=\frac{1}{p} \sum_{s,t=0}^{p-1} \omega^{s-st}g_j^t\\
    &=\frac{1}{p} \sum_{t=0}^{p-1} g_j^t \sum_{s=0}^{p-1}\omega^{(1-t)s}\\
    &=\frac{1}{p} g_j^1 \sum_{s=0}^{p-1}\omega^0 + \frac{1}{p}  \sum_{r=1}^{p-1}g_j^{1-r} \sum_{s=0}^{p-1} (\omega^r)^s\\
    &= \frac{p}{p} g_j + \frac{1}{p}\sum_{r=1}^{p-1}g_j^{1-r} 0\\
    &= g_j
\end{align*}

Note further that if there is some $i\in\{1,\dots,m\}$ with $j,\ell \in V_i$ then $f_j(s)$ commutes with $f_\ell(r)$ for all $r,s$, precisely because $g_j$ commutes with $g_\ell$. 

Additionally, $\{f_j(s):s\in\bb Z_p\}$ form an orthogonal family summing to $1$. For $s\neq r$:
\begin{align*}
    f_j(s)f_j(r) &= \frac{1}{p^2}\sum_{t,u=0}^{p-1} \omega^{-st-ru}g_j^{t+u}\\
    &= \frac{1}{p^2}\sum_{\ell=0}^{p-1}\left(\sum_{\substack{t,u=0\\t+u=\ell}}^{p-1} \omega^{-st-ru}\right)g_j^\ell\\
    &=\frac{1}{p^2}\sum_{\ell=0}^{p-1}\left(\sum_{t=0}^{p-1} \omega^{-(s-r)t}\right)\omega^{-r\ell}g_j^\ell\\
    &= \frac{1}{p^2}\sum_{\ell=0}^{p-1}0\cdot\omega^{-r\ell}g_j^\ell\\
    &= 0\\
    \sum_{s=0}^{p-1} f_j(s) &= \frac{1}{p} \sum_{s,t=0}^{p-1}(\omega^{-s}g_j)^t\\
    &= \sum_{t=0}^{p-1}\left(\frac{1}{p}\sum_{s=0}^{p-1}\omega^{-st}\right)g_j^t\\
    &= \frac{p}{p}g_j^0 + \frac{1}{p}\left(1 + \omega + \omega^2 + \dots + \omega^{p-1}\right)\sum_{t=1}^{p-1}g_j^t\\
    &= 1
\end{align*}

With these properties of $f_j(s)$, it is fairly easy to extend them and prove all but one required property of $\psi$.

For each $a_{i,x}$, we have $\psi(a_{i,x})$ is a self-adjoint projection, as it is a commuting product of self-adjoint projections.

Now let $(x,y)$ be a pair of incompatible solutions to rows $(i,k)$. Then there exists $u\in V_i \cap V_k$ with $x_u \neq y_u$. But then $f_u(x_u)f_u(y_u) = 0$. So 
    \begin{align*}
        \psi(a_{i,x})\psi(a_{k,y}) &= \prod_{j\in V_i} f_j(x_j) \prod_{\ell \in V_k} f_\ell(y_\ell)\\
        &=\left(\prod_{j\in V_i\backslash \{u\}} f_j(x_j) \right)f_u(x_u)f_u(y_u)\left(\prod_{\ell \in V_k\backslash \{u\}} f_\ell(y_\ell)\right)\\
        &= 0
    \end{align*}
    \item The last property we need to show is that $\sum_{x\in S_i} \psi(a_{i,x}) = 1$. This is the trickiest one to prove and requires the use of some involved symmetry arguments. We will first note that because $A_{i,j}\neq 0 \in \bb Z_p$ for $j\in V_i$, we can apply a change of variables in our definition of $f_j$ as follows, replacing $t$ with $tA_{i,j}$:
    $$f_j(s) = \frac{1}{p}\sum_{t=0}^{p-1} (\omega^{-s}g_j)^{tA_{i,j}}$$
    We will let the set $\cl P_i$ denote the set of partitions $\lambda$ of the set $V_i$ into exactly $p$ blocks, where the $t^{\text{th}}$ block is denoted $\lambda_t$. More precisely, each $\lambda = \{\lambda_t : t = 0,\dots,p-1\}$, satisfying $V_i = \cup_{t} \lambda_t$ and $\lambda_t \cap \lambda_s = \emptyset$ when $s\neq t$. Note that $\lambda_t$ may be empty.
    
    We then have
    \begin{align*}
        p^{|V_i|}\sum_{x\in S_i} \psi(a_{i,x}) &= \sum_{x\in S_i} \prod_{j\in V_i} \sum_{t=0}^{p-1} (\omega^{-x_j}g_j)^{tA_{i,j}}\\
        &= \sum_{x\in S_i}\sum_{\lambda \in \cl P_i} \prod_{t=0}^{p-1}\prod_{j\in\lambda_t}(\omega^{-x_j}g_j)^{tA{i,j}}\\
        &= \sum_{\lambda \in \cl P_i}\sum_{x\in S_i}\prod_{t=0}^{p-1}\left( \omega^{-\sum_{j\in\lambda_t}A_{i,j}x_j}\prod_{j\in\lambda_t}g_j^{A_{i,j}}\right)^t
    \end{align*}
    We will divide the outer sum into a sum where $\lambda$ has exactly one non-empty block $\lambda_s = V_i$ and a sum where $\lambda$ has more than one non-empty block. We will denote the second set of partitions by $\Tilde{\cl P_i}$. The sum splits into the following two pieces.
    \begin{align}
        \label{eq:oneblock}\sum_{s=0}^{p-1}\sum_{x\in S_i} \left( \omega^{-\sum_{j\in V_i } A_{i,j}x_j} \prod_{j\in V_i} g_j^{A_{i,j}}\right)^s\\
        \label{eq:moreblocks}\sum_{\lambda \in \tilde{\cl P_i}}\sum_{x\in S_i}\prod_{t=0}^{p-1}\left( \omega^{-\sum_{j\in\lambda_t}A_{i,j}x_j}\prod_{j\in\lambda_t}g_j^{A_{i,j}}\right)^t
    \end{align}
    and we have that 
    $p^{|V_i|}\sum_{x\in S_i} \psi(a_{i,x})$ is the sum of \eqref{eq:oneblock} and \eqref{eq:moreblocks}.
    We will now show that $\eqref{eq:oneblock} = p^{|V_i|}$ and $\eqref{eq:moreblocks} = 0$, completing the proof of this particular property.
    \begin{align*}
        \eqref{eq:oneblock} &= \sum_{s=0}^{p-1}\sum_{x\in S_i} \left( \omega^{-b_i} \omega^{b_i}\cdot 1\right)^s\\
        &= \sum_{s=0}^{p-1}\sum_{x\in S_i}1\\
        &= p \cdot |S_i|\\
        &= p^{|V_i|}
    \end{align*}
    
    \begin{align*}
        \eqref{eq:moreblocks} &= \sum_{\lambda \in \tilde{\cl P_i}}\sum_{x\in S_i}\prod_{t=0}^{p-1} \omega^{ -t\sum_{j\in\lambda_t} A_{i,j} x_j} \prod_{j \in \lambda_t} g_j^{tA_{i,j}}\\
        &=  \sum_{\lambda \in \tilde{\cl P_i}}\left(\sum_{x\in S_i} \omega^{ -\sum_{t=0}^{p-1}\sum_{j\in\lambda_t} tA_{i,j} x_j} \right)\prod_{t=0}^{p-1} \prod_{j \in \lambda_t} g_j^{tA_{i,j}}\\
    \end{align*}
    For the following paragraph, fix $\lambda \in \tilde{\cl P_i}$. Let \begin{align*}
        \mu_i: \bb Z_p^n &\to \bb Z_p\\
        x &\mapsto -\sum_{t=0}^{p-1}\sum_{j\in\lambda_t} tA_{i,j} x_j.
    \end{align*}
    As $x$ ranges over $S_i$ (a vector space over $\bb Z_p$), the linear functional $\mu_i(x)$ could either be constantly $0$ or hit every value in $\bb Z_p$ equally many times. But because $\lambda$ has at least two blocks, this functional is not a constant multiple of $\sum_{j\in V_i} A_{i,j} x_j$, so has a kernel defining a hyperplane that is not equal to $S_i$. In other words, $\mu_i$ is not constantly $0$ for $x\in S_i$. Thus, $\mu_i$ must take each value equally many times, and so we get that
    $$\sum_{x\in S_i} \omega^{\mu_i(x)} = 0.$$
    Thus, $(2)=0$.
    
    Putting both pieces together, we get that
    $$\sum_{x\in S_i}\psi(a_{i,x}) = p^{-|V_i|} (p^{|V_i|} + 0) = 1$$

This completes the argument that $\psi:  \cl A(\mathrm{syncLCS}(A,b)) \to \bb C \Gamma(A,b)$ is a $*$-homomorphism.

Finally, we must show that $\overline{\psi}$ and $\tilde\phi$ are mutual inverses.

Let $\ell\in\{1,\dots,n\}$. Let $i$ be such that $\ell \in V_i$.
\begin{align*}
    \psi(\phi(g_\ell)) &= \psi\left(\sum_{x\in S_i} \omega^{x_\ell} a_{i,x}\right)\\
    &= \sum_{x\in S_i}\omega^{x_\ell} \prod_{j\in V_i} f_j(x_j)\\
    &= \sum_{x\in S_i}\omega^{x_\ell} f_\ell (x_\ell) \prod_{j\in V_i\backslash \{\ell\}} f_j(x_j)\\
    &= \sum_{x\in S_i}g_\ell f_\ell (x_\ell) \prod_{j\in V_i\backslash \{\ell\}} f_j(x_j)\\
    &= g_\ell \sum_{x\in S_i} \psi(a_{i,x})\\
    &= g_\ell
\end{align*}

where the fourth equality holds because of the following:

\begin{align*}
    \omega^s f_j(s) &= \frac{\omega^s}{p} \sum_{t=0}^{p-1}(\omega^{-s}g_j)^t\\
    &= \frac{1}{p} \sum_{t=0}^{p-1} \omega^{-s(t-1)}g_j^t\\
    &= \frac{g_j}{p} \sum_{t={-1}}^{p-2} \omega^{-st}g_j^t\\
    &= g_j f_j(s)
\end{align*}


If $q$ is the canonical quotient map $ \bb C \Gamma(A,b) \to \frac{\bb C \Gamma(A,b)}{\langle J-\omega 1 \rangle}$, then for all $j$,
$$\overline\psi\circ\tilde\phi(q(g_j)) = q(g_j).$$
Also, $\psi\circ\phi(J-\omega1) = \psi(0)= 0$. Thus, $$\overline\psi\circ\tilde\phi(q(J)) = q(J).$$ In summary, we get that $\overline\psi\circ\tilde\phi$ is the identity on $\frac{\bb C \Gamma(A,b)}{\langle J-\omega 1 \rangle}$.

In the other direction, let $i \in \{1,\dots,m\}$, and $y\in S_i$. Then we have 
\begin{align*}
    \phi(\psi(a_{i,y})) &= \phi\left(\prod_{j\in V_i} f_j (y_j)\right)\\
    &= \prod_{j\in V_i}\frac{1}{p} \sum_{t=0}^{p-1}\left(\omega^{-y_j}\phi(g_j)\right)^t\\
    &= \frac{1}{p^{|V_i|}} \prod_{j\in V_i}\sum_{t=0}^{p-1}\left( \omega^{-y_j}\sum_{x\in S_i} \omega^{x_j}a_{i,x}\right)^t\\
    &= \frac{1}{p^{|V_i|}} \prod_{j\in V_i}\sum_{x\in S_i}\left(\sum_{t=0}^{p-1} \omega^{(x_j - y_j)t}\right)a_{i,x}\\
    &= \frac{1}{p^{|V_i|}} \sum_{x\in S_i}\left(\prod_{j\in V_i}\sum_{t=0}^{p-1} \omega^{(x_j - y_j)t}\right)a_{i,x}\\
    &= \frac{1}{p^{|V_i|}}\prod_{j\in V_i}\sum_{t=0}^{p-1} \omega^{(y_j - y_j)t}a_{i,y}\\
    &= \frac{1}{p^{|V_i|}}\left(\prod_{j\in V_i} p \right)a_{i,y}\\
    &= a_{i,y}
\end{align*}

where the third-to-last equality holds because whenever there is some $x_j\neq y_j$, we get the $p$ distinct $p^\textrm{th}$ roots of unity which sum to $0$. Thus, the only $x$ that survives is $x=y$.

Note that $\tilde\phi$ is an induced homomorphism satisfying $\tilde\phi \circ q = \phi$. So $$\tilde\phi\circ \overline\psi = \tilde\phi\circ q\circ \psi= \phi\circ\psi = \mathrm{Id}.$$

\end{proof}
\section{Equivalence to the graph isomorphism game}
\label{sec:iso}

So far, we have seen two equivalent algebraic structures modelling the game syncLCS. Now we will see that this structure connects syncLCS to another class of games, known as graph isomorphism games. 

We can associate a graph to the system $Ax=b$. This is the same $G_{A,b}$ defined in \cite{AMRSSV}, but now $A,b$ have entries in $\bb Z_p$.

\begin{defn}
    Let $G_{A,b}$ be the graph with vertex set $\{(i,x): i\in \{1,\dots,m\}, x\in S_i(A,b)\}$ and $(i,x) \sim (j,y)$ if there exists $k\in V_i \cap V_j$ with $x_k\neq y_k$.
\end{defn}
Recall that $S_i(A,b)$ is the set of solutions to row $i$ of the system of equations $Ax = b$, and $V_i$ is the support of row $i$ of $A$. We can think of edges in $G_{A,b}$ as capturing incompatibility of solutions to the various rows. 

We will now see that a game asking the players to construct an isomorphism from $G_{A,b}$ to $G_{A,0}$ is essentially the same as the game $\textrm{syncLCS}(A,b)$, which asks players to demonstrate a shared solution to the system $Ax=b$. This graph game is known more generally as $\textrm{Iso}(G,H)$ for two graphs $G$ and $H$. It was introduced in \cite{AMRSSV} but we will define it here for completeness.

\begin{defn}
Let $G,H$ be two graphs. The synchronous game $\textrm{Iso}(G,H)$ has input set $\cl{I} = V(G)\sqcup V(H)$, output set $\cl{O} = V(G)\sqcup V(H)$, and rule function $\lambda(x,y|v,w)=1$ exactly when these conditions are met:
\begin{itemize}
    \item $v$ and $x$ belong to different graphs,
    \item $w$ and $y$ belong to different graphs, and
    \item if $v,w$ are in the same graph, then their relationship in that graph (equal, adjacent, or distinct nonadjacent) must also be satisfied by $x,y$. 
\end{itemize}
\end{defn}

The following theorem was first stated as such in an early version of \cite{BCEHPSW} for the $p=2$ case, but the proof had an error, as there was a third game included in the equivalence which is now known to only satisfy a weaker form of equivalence to these two games. That third game is the graph homomorphism game from $K_m$ to $\overline{G_{A,b}}$. The result in \cite{BCEHPSW} was then weakened to say the three games are hereditarily $*$-equivalent, meaning that the hereditary quotients of the game algebras possess a pair of $*$-homomorphisms.

The following theorem generalizes that equivalence result to the $p>2$ setting, and drops the modifier ``hereditarily'' once again, but only for two of the three games originally considered.

\begin{thm}
\label{thm:sync-iso}
    Let $A$ be an $m\times n$ matrix over $\bb Z_p$ and let $b \in \bb Z_p^n$. Then the following two synchronous games are $*$-equivalent:
    \begin{enumerate}
        \item $\textrm{syncLCS}(A,b)$
        \item $\textrm{Iso}(G_{A,b},G_{A,0})$
    \end{enumerate}
\end{thm}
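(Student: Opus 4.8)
The plan is to exhibit unital $*$-homomorphisms in both directions between $\cl A(\mathrm{syncLCS}(A,b))$ and $\cl A(\mathrm{Iso}(G_{A,b},G_{A,0}))$. I would first analyse the target algebra: because the rule function of $\mathrm{Iso}(G,H)$ imposes no constraint on a pair of inputs drawn from different graphs, the game algebra splits as a unital free product $\cl A(\mathrm{Iso}(G,H))\cong\cl B_{G\to H}\ast_{\mathbb C}\cl B_{H\to G}$, where $\cl B_{G\to H}$ is generated by projections $e_{v,w}$ ($v\in V(G)$, $w\in V(H)$) with $\sum_{w}e_{v,w}=1$ and $e_{v,w}e_{v',w'}=0$ whenever the $G$-relation of $v,v'$ (equal, adjacent, or distinct non-adjacent) differs from the $H$-relation of $w,w'$; in particular the $e_{\cdot,w}$ are orthogonal down each column. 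I would also record the presentation of $\cl A(\mathrm{syncLCS}(A,b))$ by the $a_{i,x}$ given in the excerpt, together with the elementary fact that for each row $i$ the set $S_i(A,b)$ is a coset of the subspace $S_i(A,0)$ inside the $V_i$-supported vectors, so that $z\mapsto x-z$ is a bijection $S_i(A,0)\xrightarrow{\sim}S_i(A,b)$ for any fixed $x\in S_i(A,b)$, and $x-z\in S_i(A,b)$ whenever $x\in S_i(A,b)$, $z\in S_i(A,0)$.

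For the homomorphism $\cl A(\mathrm{Iso}(G_{A,b},G_{A,0}))\to\cl A(\mathrm{syncLCS}(A,b))$ I would send the generator $e_{(i,x),(j,z)}$ of the first free factor, and likewise the generator $e_{(j,z),(i,x)}$ of the second, to $\delta_{ij}\,a_{i,\,x-z}$. Projectivity is immediate; the row relation follows from $\sum_{z\in S_i(A,0)}a_{i,x-z}=\sum_{w\in S_i(A,b)}a_{i,w}=1$; and the rule relations hold because $\delta_{ij}$ annihilates every cross-row image, while on the images with $j=i$ and $j'=i'$ a short computation using $(x-z)_k-(x'-z')_k=(x_k-x'_k)-(z_k-z'_k)$ shows that whenever $a_{i,x-z}a_{i',x'-z'}\ne 0$ (i.e.\ $x-z$ and $x'-z'$ are compatible for $(i,i')$) the $G_{A,b}$-relation of $(i,x),(i',x')$ equals the $G_{A,0}$-relation of $(i,z),(i',z')$. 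This is routine bookkeeping, and by the universal property of the free product it assembles to the desired $*$-homomorphism.

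The homomorphism $\cl A(\mathrm{syncLCS}(A,b))\to\cl A(\mathrm{Iso}(G_{A,b},G_{A,0}))$ is the substantive direction, and here I would invoke Theorem~\ref{thm:solngp} to replace the source by $\mathbb C\Gamma(A,b)/\langle J-\omega 1\rangle$. This reduces the problem to producing a group homomorphism from $\Gamma(A,b)$ into the invertibles of $\cl A(\mathrm{Iso}(G_{A,b},G_{A,0}))$ sending $J\mapsto\omega 1$; equivalently, for each variable index $j$ one needs an element $\hat g_j$ of order dividing $p$, depending only on $j$, with $[\hat g_j,\hat g_\ell]=1$ whenever $j,\ell$ share some $V_i$ and $\prod_{j\in V_i}\hat g_j^{A_{ij}}=\omega^{b_i}1$ for every row $i$. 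The element $\hat g_j$ should lift $\phi(g_j)=\sum_{x\in S_i}\omega^{x_j}a_{i,x}$ along the map of the previous paragraph; a first candidate is the normalised twist $p^{1-|V_i|}\sum_{x\in S_i(A,b),\,z\in S_i(A,0)}\omega^{x_j-z_j}e_{(i,x),(i,z)}$ built from $\cl B_{G_{A,b}\to G_{A,0}}$, but such a row-local element need not be unitary (the relevant columns of its generating matrix need not sum to $1$), so the construction has to be corrected — presumably by twisting the full matrix of generators and/or by using both free factors, in order to recover the required $p$-th roots of unity. A possibly more transparent alternative is to extract a candidate solution from the action of the quantum isomorphism on the size-$m$ independent set $\{(i,0):1\le i\le m\}\subseteq V(G_{A,0})$, using that an independent set of either graph meets each row-block in at most one vertex, so a maximum one encodes a solution of $Ax=b$.

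I expect this reverse direction to be the main obstacle. The underlying difficulty is that a quantum isomorphism $G_{A,b}\to G_{A,0}$ need not respect the row-block partition of the vertex sets, so the naive row-by-row assignments fail to be $*$-homomorphisms — for instance $\sum_{x\in S_i}e_{(i,x),(i,x-w)}$ is a sum of projections that are not pairwise orthogonal, and the naive row-restricted twists are not unitary — which is why complex coefficients, and hence the detour through the solution group, seem to be essential. Resolving this will require genuinely exploiting the finer relations of the iso-game algebra (column orthogonality, and preservation of the distinct-non-adjacent relation) to absorb the cross-row freedom as a harmless relabelling of the equations, in a manner parallel to, but more involved than, the manipulations with the idempotents $P_{i,j}(t)$ in the proof of Theorem~\ref{thm:solngp}.
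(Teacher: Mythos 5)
Your forward direction is essentially the paper's map (the paper uses $e_{(i,x),(j,y)}\mapsto\delta_{ij}a_{i,x+y}$ with $y\in S_j(A,0)$; your $x-z$ is the same thing up to the sign automorphism of the subspace $S_j(A,0)$), and your bookkeeping for it is correct. One small inaccuracy there: $\cl A(\mathrm{Iso}(G,H))$ is not a free product of the two ``directional'' subalgebras --- the relations in fact force $e_{h,g}=e_{g,h}$, so the two factors are identified rather than freely amalgamated. This does not damage your forward map, since you send $e_{(i,x),(j,z)}$ and $e_{(j,z),(i,x)}$ to the same element.

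The genuine gap is the reverse direction, which you leave unconstructed; the detour through $\Gamma(A,b)$ and the search for unitaries $\hat g_j$ is a wrong turn, and your own candidate fails for the reasons you identify. The paper's map is far simpler and does not pass through Theorem \ref{thm:solngp}: it sends $a_{i,x}\mapsto\sum_{k=1}^m e_{(i,x),(k,0)}$, where $0\in S_k(A,0)$ for every $k$. You came close when you mentioned the independent set $\{(k,0)\}\subseteq V(G_{A,0})$, but the point is not to ``extract a solution'' from it --- it is that these $m$ vertices are pairwise nonadjacent (and pairwise distinct) in $G_{A,0}$, so: (i) for fixed input $(i,x)$ the projections $e_{(i,x),(k,0)}$ are pairwise orthogonal, making the image a projection; and (ii) if $(i,x)\sim(j,y)$ in $G_{A,b}$ then every cross term $e_{(i,x),(k,0)}e_{(j,y),(\ell,0)}$ vanishes by the rule relations, since $(k,0)\nsim(\ell,0)$. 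The only nontrivial relation is $\sum_{x\in S_i}\psi(a_{i,x})=1$, and this is where the real work lies: one sets $p_i=\sum_{x\in S_i}\sum_k e_{(i,x),(k,0)}$, checks $p_i$ is a projection, and uses the column relations to compute $\sum_{i=1}^m(1-p_i)^*(1-p_i)=m\cdot 1-\sum_k\sum_{i,x}e_{(i,x),(k,0)}=0$; hereditarity of $\cl A(\mathrm{Iso})$ (which follows from the existence of a faithful state, Theorem 4.7 of \cite{BCEHPSW}) then forces each $p_i=1$. Without this hereditarity input, or some substitute for it, the reverse direction cannot be closed, and your proposal as written does not supply one.
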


Before seeing the proof, it will be useful to learn a little about the game algebra of $\textrm{Iso}(G,H)$.

For graphs $G,H$, $\cl A(\textrm{Iso}(G,H))$ is generated by $e_{g,h}$ where $g\in V(G)$ and $h\in V(H)$. Looking at the definition of $\textrm{Iso}(G,H)$ and the definition of the game-algebra, it seems we may need the other three possibilities of which graphs the pair $(g,h)$ can live in. However, it can be shown that $e_{g,h} = 0$ whenever $g,h$ are vertices in the same graph. With some additional cleverness, it can also be shown that $e_{h,g} = e_{g,h}$, eliminating the need for both index orderings.

This means that in our case, $\cl A(\textrm{Iso}(G_{A,b},G_{A,0}))$ is generated by $e_{(i,x),(j,y)}$ where $i,j\in\{1,\dots,m\}, x\in S_i(A,b), y\in S_j(A,0)$. That is, each $(i,x)$ is a vertex of $G_{A,b}$ and $(j,y)$ is a vertex of $G_{A,0}$. The usual game algebra relations are in place: each generator must be a self-adjoint idempotent. Also for a fixed input, the sum over all outputs must be $1$. In that case, this gives us the following two relations:
\begin{align}
    \sum_{(i,x)} e_{(i,x),(j,y)} &= 1\\
    \sum_{(j,y)} e_{(i,x),(j,y)} &= 1
\end{align}
Furthermore, the product of two generators must be $0$ when the corresponding inputs and outputs don't satisfy the game's rules.
That means $e_{(i,x),(j,y)}e_{(i',x'),(j',y')}=0$ whenever the relationship of the vertices $(i,x), (i',x')$ in $G_{A,b}$ is the same as the relationship of the vertices $(j,y),(j',y')$ in $G_{A,0}$. In more detail,  $e_{(i,x),(j,y)}e_{(i',x'),(j',y')}=0$ whenever:
\begin{itemize}
    \item $(i,x)=(i',x')$ but $(j,y)\neq (j',y')$, or $(j,y)= (j',y')$ but $(i,x)\neq(i',x')$. That is, one pair of vertices is equal when the other pair is not equal.
    \item $(i,x)\sim(i',x')$ but $(j,y)\nsim (j',y')$, or $(j,y) \sim (j',y')$ but $(i,x)\nsim(i',x')$. That is, one pair of vertices is incompatible when the other pair is compatible.
\end{itemize}

\begin{proof}[Proof of Theorem \ref{thm:sync-iso}]


We claim this map is a unital $*$-homomorphism:
\begin{align*}
    \phi:\cl A(\textrm{Iso}(G_{A,b}, G_{A,0})) &\to \cl A(\textrm{syncLCS}(A,b))\\
    e_{(i,x),(j,y)}&\mapsto \delta_{ij}a_{i,x+y}
\end{align*}
    for all $i,j\in\{1,\dots,m\}$ and all $x\in S_i(A,b), y\in S_j(A,0)$.
Each generator maps to a self-adjoint idempotent. We must verify that $\phi$ preserves the other relations of $A(\textrm{Iso}(G_{A,b}, G_{A,0}))$. Unitality will follow from these.

First we show that the ``sum to one'' relations are preserved by $\phi$.

\begin{align*}
    \sum_{(i,x)} \phi(e_{(i,x),(j,y)}) &= \sum_{(i,x)}\delta_{ij}a_{i,x+y}\\
    &= \sum_{x\in S_j(A,b)} a_{j,x+y}\\
    &= \sum_{z\in S_j(A,b)} a_{j,z}\\
    &= 1
\end{align*}
The second-to-last step here is due to the fact that the set of solutions to $Ax=b$ is preserved by adding a homogeneous solution.
Similarly,
\begin{align*}
    \sum_{(j,y)} \phi(e_{(i,x),(j,y)}) &= \sum_{(j,y)}\delta_{ij}a_{i,x+y}\\
    &= \sum_{y\in S_i(A,0)} a_{i,x+y}\\
    &= \sum_{z\in S_j(A,b)} a_{j,z}\\
    &= 1
\end{align*}
The second-to-last step here is due to a similar fact: The set of solutions to $Ax=b$ can be constructed by starting with a particular solution and adding to it all homogeneous solutions.
From the above arguments, we get that $\phi$ sends $1$ to $1$. 

Now we show that the relations corresponding to the rules of the isomorphism game are preserved under $\phi$.

Assume $e_{(i,x),(j,y)}e_{(i',x'),(j',y')}=0$. Then we would like to show that $$\phi(e_{(i,x),(j,y)})\phi(e_{(i',x'),(j',y')}) = \delta_{ij}a_{i,x+y}\delta_{i'
j'}a_{i',x'+y'}$$
evaluates to $0$ as well.
Whenever $i\neq j$ or $i'\neq j'$, the above expression is $0$, so we may assume $i=j, i'=j'$, and we must show that $e_{(i,x),(i,y)}e_{(i',x'),(i',y')}= 0 \implies a_{i,x+y}a_{i',x'+y'}=0$. That is, we must show that $(x+y, x'+y')$ are incompatible solutions to rows $(i,i')$.

If $e_{(i,x),(i,y)} e_{(i',x'),(i',y')}=0$, there are several cases to consider.
\begin{enumerate}
    \item If $(i,x) = (i',x')$, we must have $(i,y) \neq (i',y')$. Because $i=i'$, we have that $y\neq y'$. Then, because $x=x'$, $x+y \neq x'+y'$, and so $(x+y, x'+y')$ are incompatible solutions to row $i$.
    \item Now if $(i,y) = (i',y')$, we must have  $(i,x) \neq (i',x')$. Exchanging $x$ and $y$ in the above argument shows that $(x+y, x'+y')$ are once again incompatible solutions to row $i$.
    \item If $(i,x) \sim (i',x')$, then there exists $k_0\in V_i \cap V_{i'}$ with $x_{k_0} \neq x_{k_0}'$. But also, we must have that $(i,y) \nsim (i',y')$, so for all $k\in V_i \cap V_{i'}$, $y_k = y_k'$. Thus,  $(x+y)_{k+0} \neq (x'+y')_{k_0}$, so $(x+y, x'+y')$ are incompatible solutions to rows $(i,i')$.
    \item If $(i,y) \sim (i',y')$, then exchanging $x$ and $y$ in the above argument resolves this last case.
\end{enumerate}
So, we have shown that in all cases, $\phi$ preserves the ``rules'' relations of the isomorphism game. We conclude that $\phi$ is a unital $*$-homomorphism $\cl A(\textrm{Iso}(G_{A,b}, G_{A,0})) \to \cl A(\textrm{syncLCS}(A,b))$.

In the other direction, let
\begin{align*}
    \psi:\cl A(\textrm{syncLCS}(A,b)) &\to \cl A(\textrm{Iso}(G_{A,b}, G_{A,0})) \\
    a_{i,x} &\mapsto \sum_{k=1}^m e_{(i,x),(k,0)}
\end{align*}
for all $i\in\{1,\dots,m\}, x\in S_i(A,b)$.
$\psi(a_{i,x})$ is self-adjoint and idempotent because $\{e_{(i,x),(k,0)}:k=1,\dots,m\}$ are a family of orthogonal projections.

When $(x,y)$ are incompatible solutions to rows $(i,j)$, we have $a_{i,x}a_{j,y} = 0$. But then $(i,x)\sim(j,y)$ in $G_{A,b}$, so
\begin{equation*}
    \psi(a_{i,x})\psi(a_{j,y}) = \sum_{k,\ell=1}^m e_{(i,x),(k,0)}e_{(j,y),(\ell,0)} = 0 
\end{equation*}
because each $(k,0)\nsim (\ell,0)$ in $G_{A,0}$ (they are either the same vertex, or distinct but compatible and therefore nonadjacent).

Finally, we must show that $\sum_{x\in S_i} \psi(a_{i,x}) = 1$. This piece of the proof relies on the fact that $\cl A(\textrm{Iso}(X,Y))$ is *-isomorphic to  $\cl O(G_X,G_Y)$ as defined in \cite{BCEHPSW}. In that paper, Theorem 4.7 tells us that whenever $\cl A(\textrm{Iso}(X,Y))$ is nonzero, it admits a faithful state. Therefore, $\cl A(\textrm{Iso}(X,Y))$ is hereditary. A *-algebra $\cl A$ is hereditary whenever $\sum_{i=1}^N x_i^*x_i = 0$ implies each $x_i = 0$, where $x_i \in \cl A$.

We follow exactly the steps taken in \cite{BCEHPSW} in the proof of Theorem 5.6.
Let $p_i:= \sum_{x\in S_i(A,b)} \sum_{k=1}^m e_{(i,x),(k,0)}$. We would like to show that $p_i = 1$. Note that $p_i$ is self-adjoint, being a sum of projections, and idempotent, because $e_{(i,x),(k,0)}e_{(i,y),(\ell,0)} = 0$ unless $x=y$ and $k=\ell$. Then, $q_i:= 1-p_i$ is also a self-adjoint idempotent. Thus,
\begin{align*}
    \sum_{i=1}^m q_i^*q_i &= \sum_{i=1}^m 1-p_i\\
    &= m\cdot 1 - \sum_{k=1}^m \sum_{i=1}^m \sum_{x\in S_i(A,b)}e_{(i,x),(k,0)}\\
    &= m\cdot 1 -\sum_{k=1}^m 1\\
    &= 0  
\end{align*}
so because the algebra is hereditary, we have that each $q_i=0$. Thus, each $p_i = 1$. We can also conclude from this that $\psi$ is unital.

$\psi$ is a unital $*$-homomorphism $\cl A(\textrm{syncLCS}(A,b)) \to \cl A(\textrm{Iso}(G_{A,b}, G_{A,0}))$. We have exhibited a unital $*$-homomorphism in both directions, meaning that the games $\textrm{syncLCS}(A,b)$ and $\textrm{Iso}(G_{A,b}, G_{A,0})$ are $*$-equivalent.
\end{proof}
\bibliographystyle{alpha}
\bibliography{syncLCS}

\end{document}